\newenvironment{headdata}{\maketitle}{}
\def\lipics
\newcommand{\nat}{\ensuremath{\mathbb{N}}}
\title{$P\not=NP$ relative to a $P$-complete oracle}
\titlerunning{$\P^\P\not=\NP^\P$}
\author{Reiner Czerwinski}{TU Berlin (Alumnus), Berlin, Germany}{reiner.czerwinski@posteo.de}{https://orcid.org/0000-0002-4523-4420}{}
 \authorrunning{Reiner Czerwinski}
\keywords{Complexity classes, Rice's Theorem, P versus NP}
\author{Reiner Czerwinski\\TU Berlin (Alumnus)\\reiner.czerwinski@posteo.de}
\begin{document}
\begin{headdata}
  \begin{abstract}
    The $\P$ versus $\NP$ problem is still unsolved.
    But there are several oracles with $\P$ unequal $\NP$ relative
    to them.

    Here we will prove, that $\P\not=\NP$ relative to a $\P$-complete
    oracle.

    In this paper, we use padding arguments as the proof method. The padding arguments
    are not bounded by a computable function.
    Such as we can use methods from
    computability theory to separate complexity classes.
  \end{abstract}
\end{headdata}

\section{Introduction}

There are several oracles for which the relative P versus NP problem is
already solved. Baker, Gill, and Solovay\cite{baker1975relativizations} constructed an oracle
$A$ with $\P^A\not=\NP^A$ and an oracle $B$ with
$\P^B=\NP^B$.
Relative to a random oracle, $\P$ is unequal $\NP$\cite{Hstad2017AnAD}.
In this paper, we analyze P versus NP relative to a
P-complete oracle, i.e., we look at whether $\P^\P = \NP^\P$
or not.

\newcommand{\UNP}{\ensuremath{U_{\NP}}}
\newcommand{\oUP}{\ensuremath{U_{\P}}}
\newcommand{\DNP}{\ensuremath{D_{\NP}}}
\newcommand{\oDP}{\ensuremath{D_{\P}}}

As an oracle, we use the set
\[
  \oUP = \{\langle M, x, 1^t \rangle \mid \text{ TM }M\text{ accepts }x\text{ within }t\text{ steps}  \}
\]

To show, that $\NP^{\oUP}\not=\P^{\oUP}$ we use an \NP-complete set
\[
  \UNP =  \{\langle M, 1^n, 1^t \rangle \mid \exists x\in\{0,1\}^n   \text{ where TM }M\text{ accepts }x\text{ within }t\text{ steps} \} 
\]

In section \ref{sec:without} we will analyze the sets without time limits.
In this case, we can use methods from computability theory.

In section \ref{sec:withtime} we will show a connection from the sets in section \ref{sec:without} to the sets $\UNP$ and $\oUP$. So, we can conclude to $\NP^{\oUP}$.
\section{Computation without Time limit} \label{sec:without}
To analyze, whether $\UNP^{\oUP}$ is efficiently computable, we define
two sets analog to $\UNP$ and $\oUP$
without time limit.
\[\begin{split}
    \oDP = \{\langle M, x \rangle \mid \text{ TM }M\text{ accepts }x \} \\
    \DNP = \{    \{\langle M, 1^n \rangle \mid \exists x\in\{0,1\}^n   \text{ where TM }M\text{ accepts }x \} 
  \end{split}
\]
So, $\langle M, x \rangle\in\oDP$, if $x\in L(M)$ and
$\langle M, 1^n \rangle\in\DNP$, if $\exists x\in\{0,1\}^n$ with $x\in L(M)$.

It is obvious, that
\begin{itemize}
\item $\langle M, x \rangle\in\oDP$  iff $\exists t : \langle M, x, 1^t \rangle \in \oUP$
\item $\langle M, 1^n \rangle\in\DNP$  iff $\exists t : \langle M, 1^n, 1^t \rangle \in \UNP$
\end{itemize}
\begin{lemma} \label{uncompute}
  If $\langle M, \{0,1\}^n \rangle \in\DNP$, then it is undecidable,
  which $x\in\{0,1\}^n$ is in $L(M)$. 
\end{lemma}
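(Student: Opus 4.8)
The plan is to reduce the undecidable acceptance problem $\mathrm{ACC}=\{\langle N,w\rangle \mid \text{TM } N \text{ accepts } w\}$ to the task in question. So suppose, towards a contradiction, that there were a total algorithm $D$ that, given any $\langle M,1^n\rangle\in\DNP$ together with a string $x\in\{0,1\}^n$, correctly decides whether $x\in L(M)$.

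First I would set up the reduction. Given an instance $\langle N,w\rangle$, I construct (effectively, via the s-m-n theorem) a machine $M=M_{N,w}$ over the input alphabet $\{0,1\}$ that on input $0$ accepts at once, on input $1$ simulates $N$ on $w$ and accepts iff $N$ accepts $w$, and on every other input rejects; and I set $n=1$ (any fixed $n\ge 1$ works just as well, padding $0$ and $1$ out to length $n$). The key point is that $M_{N,w}$ accepts $0\in\{0,1\}^1$ unconditionally, so $\langle M_{N,w},1^1\rangle\in\DNP$ holds no matter what $N$ does on $w$; hence the promise that $D$ relies on is always met.

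Then I would finish: invoke $D$ on $\bigl(\langle M_{N,w},1^1\rangle,\,1\bigr)$. Because the promise holds, $D$ halts and answers ``$1\in L(M_{N,w})$?'' correctly, and by construction this is precisely the question ``does $N$ accept $w$?''. Thus $D$ would decide $\mathrm{ACC}$, which is impossible; so no such $D$ exists, which is the claim of the lemma.

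The step needing the most care is the promise management: $D$ is only guaranteed to behave on genuine elements of $\DNP$, so the reduction must force membership in $\DNP$ regardless of the $\mathrm{ACC}$-instance's answer — that is exactly what the always-accepted padding string $0$ buys us, and it is why the argument applies to the promised problem rather than an unrestricted one. I would also remark that a variant replacing the fixed witness $0$ by one obtained from Kleene's recursion theorem (diagonalizing against a putative procedure that outputs \emph{some} $x\in\{0,1\}^n\cap L(M)$) yields the stronger ``search'' form of the statement; that is the version I expect to reuse in Section~\ref{sec:withtime} when $\DNP$ and $\oDP$ are connected back to $\UNP$ and $\oUP$.
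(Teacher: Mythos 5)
Your proof is correct, and it takes a genuinely different --- and in fact more careful --- route than the paper's. The paper argues via Rice's theorem: for a fixed $x$, the set of machines $M$ with $x\in L(M)$ is a nontrivial semantic property, hence undecidable, and likewise for ``$x$ is the only length-$n$ string in $L(M)$.'' You instead give an explicit many-one reduction from the acceptance problem, building $M_{N,w}$ so that the padding string $0$ is accepted unconditionally while $1$ is accepted iff $N$ accepts $w$. The main thing your version buys is correct promise management: the lemma's hypothesis restricts attention to inputs with $\langle M,1^n\rangle\in\DNP$, and a bare appeal to Rice's theorem does not by itself show that the problem stays undecidable on that restricted domain (an undecidable set can become decidable under a promise). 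Your construction forces every reduction instance to satisfy the promise, so the conclusion genuinely applies to the promise problem; the Rice's-theorem argument is silent on exactly this point. Your recursion-theorem remark for the ``search'' form is also apt, since that is the form implicitly needed later. One caveat applying to both proofs: undecidability of the unbounded problem does not by itself yield the quantitative $\theta(2^n)$ oracle-query lower bound asserted in Corollary~\ref{blackbox}, so neither argument should be read as establishing that stronger claim.
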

\begin{proof}
 Let $M$ be an arbitrary TM. Due to Rice's theorem \cite{RICE} every nontrivial
  property of the language $L(M)$ is undecidable.
  Let $x \in \{0,1\}^n$ arbitrary but fixed. Whether $x\in L(M) $ is undecidable.
  Also, if $x$ is the only element in $\bigl(\{0,1\}^n \cap L(M)\bigr) $, is
  undecidable, because it is a nontrivial property of $L(M)$. So,
  it is undecidable, which element with length $n$ is in $L(M)$.
\end{proof}

\begin{corollary}\label{blackbox}
  Let $M$ be an arbitrary TM and $n\in\nat$.
  To test whether $\langle M, 1^n \rangle \in \DNP$ one needs
  $\theta(2^n)$ accesses to the oracle $\oDP$ in the worst case.
\end{corollary}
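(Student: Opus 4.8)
The plan is to prove the claim by establishing matching bounds: an upper bound showing that $2^n$ oracle calls always suffice, and a lower bound showing that in the worst case $2^n$ calls are unavoidable, so that the query complexity is $\theta(2^n)$. Recall from the definitions that $\langle M, 1^n\rangle\in\DNP$ holds exactly when $\langle M, x\rangle\in\oDP$ for some $x\in\{0,1\}^n$; this equivalence is what links the two parts.

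The upper bound is immediate. I would simply query $\oDP$ on each of the $2^n$ strings $\langle M, x\rangle$ with $x\in\{0,1\}^n$, and answer that $\langle M,1^n\rangle\in\DNP$ if and only if at least one of these $2^n$ queries is answered affirmatively. This decides membership in $\DNP$ using exactly $2^n$ oracle accesses, so $2^n$ calls always suffice.

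For the lower bound I would run an adversary argument, treating $\oDP$ as a black box (whence the name of the corollary). Suppose, for contradiction, that an algorithm $A$ decides $\DNP$ using at most $2^n-1$ oracle calls on the input $\langle M, 1^n\rangle$. Run $A$ and let the adversary answer every call negatively. Since $A$ issues fewer than $2^n$ calls of the form $\langle M, x\rangle$ with $x\in\{0,1\}^n$, there is a string $x^\ast\in\{0,1\}^n$ on which the call $\langle M, x^\ast\rangle$ is never made. By Lemma~\ref{uncompute} it is undecidable which strings of length $n$ lie in $L(M)$, so the transcript of negative answers cannot have pinned down whether $x^\ast\in L(M)$: it is consistent both with $L(M)\cap\{0,1\}^n=\emptyset$, which forces $\langle M,1^n\rangle\notin\DNP$, and with $L(M)\cap\{0,1\}^n=\{x^\ast\}$, which forces $\langle M,1^n\rangle\in\DNP$. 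These scenarios demand opposite outputs, so $A$ must err on one of them, a contradiction. Hence $2^n$ calls are necessary in the worst case, and together with the upper bound this yields $\theta(2^n)$.

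The step I expect to be the main obstacle is justifying that no oracle call other than the direct query $\langle M, x^\ast\rangle$ can reveal whether $x^\ast\in L(M)$ — i.e.\ that a cleverly encoded call about some machine $M'$ that simulates $M$ on $x^\ast$ buys the algorithm nothing in the black-box model, and likewise that inspecting the code of $M$ gives it nothing. This is exactly the point at which Lemma~\ref{uncompute} (through Rice's theorem) has to be invoked: the membership pattern of length-$n$ strings in $L(M)$ is not computable, hence cannot be reconstructed from fewer than $2^n$ probes of the oracle, so the adversary keeps the freedom to place the single accepted string at the unprobed position $x^\ast$.
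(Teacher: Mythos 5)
Your upper bound is fine, but the lower bound — and with it the $\theta(2^n)$ claim — does not go through, and the failure is exactly at the point you flag as ``the main obstacle.'' An adversary argument requires that the oracle's answers be genuinely unconstrained on unqueried points, but $\oDP$ is not an arbitrary black box: it is one fixed set, namely $\{\langle M,x\rangle \mid M \text{ accepts } x\}$, and the answer to every query, including queries about machines other than $M$, is already determined. The adversary has no freedom to ``place the single accepted string at the unprobed position $x^\ast$,'' because $L(M)\cap\{0,1\}^n$ is whatever it is once $M$ is fixed. To get a worst case you would have to quantify over $M$ and defeat every algorithm that also sees the code of $M$, which your argument does not address.

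Worse, the obstacle you name is fatal rather than technical: a cleverly encoded call about a machine $M'$ that simulates $M$ buys the algorithm everything. The predicate $\langle M,1^n\rangle\in\DNP$ is $\exists x\,\exists t$ (``$M$ accepts some length-$n$ string''), hence computably enumerable, and $\oDP$ is a complete c.e.\ set. So from $\langle M,1^n\rangle$ one can compute the description of a machine $M'$ that dovetails $M$ on all $2^n$ strings of length $n$ and accepts iff one of those runs accepts; then $\langle M,1^n\rangle\in\DNP$ iff $\langle M',\varepsilon\rangle\in\oDP$. That is a single oracle access, for every $M$ and $n$. Lemma~\ref{uncompute} and Rice's theorem cannot block this, because they concern deciding properties of $L(M)$ \emph{without} an oracle; a many-one reduction between c.e.\ sets is a perfectly computable preprocessing step and involves no decision about $L(M)$ at all. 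So the lower bound is not merely unproven by your argument --- the statement's $\Omega(2^n)$ half is false in the model where arbitrary queries to $\oDP$ are allowed, and no adversary argument can rescue it. (The paper's own proof asserts the exponential lower bound without confronting this, so the gap is in the corollary itself, not only in your write-up.)
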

\begin{proof}
  Due to lemma \ref{uncompute} it is undecidable for an arbitrary
  TM $M$, whether an input $x$ with length $n$ is in $L(M)$.
  But the characteristic function of $L(M)$:
   \[ \chi_{L(M)}(x)=\begin{cases} 1 & x\in L(M) \\
                              0  & \text{otherwise}
                             \end{cases}
                           \]

     is computable with oracle $\oDP$ as a black box. So,
     the black box complexity is exponential.                      
  \end{proof}

  \section{Main Result}\label{sec:withtime}
  In this section we conclude from $\DNP$ with oracle $\oDP$
  to $\UNP$ with oracle $\oUP$.
  \begin{theorem}
    If $M$ is an arbitrary TM, then one needs  $\theta(2^n)$
    accesses to the oracle $\oUP$ to decide whether
    $\langle M, 1^n, 1^t \rangle \in \UNP$ in the worst case.
  \end{theorem}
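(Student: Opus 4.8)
The plan is to prove the matching upper and lower bounds separately, obtaining the lower bound by transferring the time\nobreakspace-unbounded statement of Corollary~\ref{blackbox} to the present, time-bounded, setting. For the $O(2^n)$ side, on input $\langle M, 1^n, 1^t\rangle$ I would simply query $\langle M, x, 1^t\rangle\in\oUP$ for each $x\in\B^n$ and accept iff at least one query returns ``yes''; this uses exactly $2^n$ oracle accesses, independently of $M$ and $t$.

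For the $\Omega(2^n)$ side I would argue by contradiction. Assume an algorithm $A$ decides membership in $\UNP$ with $o(2^n)$ accesses to $\oUP$ in the worst case; from it I will build an algorithm $A'$ that decides $\langle M,1^n\rangle\in\DNP$ with $o(2^n)$ accesses to $\oDP$, contradicting Corollary~\ref{blackbox}. The bridge is the equivalence recorded in Section~\ref{sec:without}: $\langle M,1^n\rangle\in\DNP$ holds iff $\langle M, 1^n, 1^t\rangle\in\UNP$ holds for some $t$, and, since membership in $\UNP$ can only become easier to satisfy as $t$ grows, this is in turn equivalent to $\langle M, 1^n, 1^{t}\rangle\in\UNP$ holding for every sufficiently large $t$. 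Fix such a threshold $t^{*}$; by Lemma~\ref{uncompute} it cannot be obtained from $\langle M,1^n\rangle$ by any algorithm, and it is exactly the kind of unbounded padding announced in the introduction. Now $A'$ runs $A$ on $\langle M, 1^n, 1^{t^{*}}\rangle$, and whenever $A$ issues a query $\langle M', x', 1^{s}\rangle\in\oUP$ it answers it through the single $\oDP$-query $\langle M', x'\rangle$: a ``no'' from $\oDP$ means $M'$ never accepts $x'$, so the $\oUP$-query is a ``no'', while a ``yes'' from $\oDP$ means $M'$ accepts $x'$ in finitely many steps, so the $\oUP$-query is a ``yes'' provided $s$ is at least that many steps. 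Hence $A'$ makes no more oracle accesses than $A$, namely $o(2^n)$, and by the choice of $t^{*}$ it returns the correct answer to whether $\langle M, 1^n\rangle\in\DNP$, the desired contradiction. Combining the two bounds gives $\theta(2^n)$.

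The step I expect to be the main obstacle is the choice of the padding $t^{*}$, which must be large enough for two things simultaneously: to make $\langle M, 1^n, 1^{t^{*}}\rangle\in\UNP$ equivalent to $\langle M,1^n\rangle\in\DNP$, and to dominate every time bound $s$ occurring in the queries that $A$ makes on input $\langle M, 1^n, 1^{t^{*}}\rangle$, so that $\oDP$ faithfully reproduces the answers of $\oUP$. No computable bound on such a $t^{*}$ is available\textemdash the relevant acceptance times are not computable, by Lemma~\ref{uncompute}\textemdash so the argument genuinely relies on padding that is not bounded by any computable function, which is why a computability-theoretic padding argument, rather than a resource-bounded reduction, is the right tool here.
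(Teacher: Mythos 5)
Your upper bound is fine, and your lower-bound strategy is essentially the paper's own: pad with a noncomputable $T$ so that the time-bounded instance coincides with the time-unbounded one, then transfer the $\Omega(2^n)$ black-box bound of Corollary~\ref{blackbox}. But the step you yourself flag as the main obstacle is a genuine gap, not a technicality (and it is the same gap present in the paper's proof). First, substituting the $\oDP$-query $\langle M', x'\rangle$ for the $\oUP$-query $\langle M', x', 1^{s}\rangle$ is unsound in the ``yes'' direction: $M'$ may accept $x'$ only after more than $s$ steps, so $\oDP$ answers ``yes'' where $\oUP$ answers ``no''. Making $t^{*}$ large does not repair this, because the quantity that would have to be dominated is not $s$ but the acceptance time of $M'$ on $x'$, and $A$ chooses the pairs $(M',x')$ and the bounds $s$ itself, adaptively and as a function of the input --- which contains $1^{t^{*}}$. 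So ``choose $t^{*}$ after seeing the queries'' is circular, and no single $t^{*}$ makes the simulation faithful.

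Second, even granting a faithful simulation, the contradiction does not materialize, because $A'$ is not an object that Corollary~\ref{blackbox} constrains. Read as a statement about algorithms, the corollary does not apply: $A'$ is not an algorithm, since $t^{*}$ is not computable from $\langle M, 1^n\rangle$, as you note yourself. Read as a pure query-complexity statement (computationally unbounded procedures charged only for oracle accesses), $A'$ needs no $\oDP$-queries at all: it can answer each $\oUP$-query $\langle M', x', 1^{s}\rangle$ by simulating $M'$ on $x'$ for $s$ steps, since $\oUP$ is decidable --- and in that model the original question $\langle M, 1^n, 1^t\rangle \in^? \UNP$ is itself decidable with zero oracle accesses by brute-force simulation, so the claimed $\Omega(2^n)$ lower bound on $\oUP$-accesses is false outright. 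The underlying issue is that the hardness certified by Lemma~\ref{uncompute} and Corollary~\ref{blackbox} lives in the time-unbounded world and does not transfer to the decidable, time-bounded sets $\UNP$ and $\oUP$; your reduction runs in precisely the direction that would require that transfer.
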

  \begin{proof}
    Let $M$ be an arbitrary TM and $n\in\nat$.
    The set $\{0,1\}^n$ is finite. So, there is a
    $T\in\nat$ with
    \[  \langle M, 1^n, 1^T \rangle \in \UNP \iff \langle M, 1^n \rangle \in \DNP \]

    In this case we know for $x\in\{0,1\}^n$:
    $\langle M, 1^n, 1^T \rangle \in \oUP$ iff
    $ \langle M, x \rangle \in \oDP$. So, the calculation
    of $UNP$ with oracle $\oUP$
    has the same time complexity as the calculation
    of $DNP$ with oracle $\oDP$. Due to lemma \ref{blackbox}
    this is $\theta(2^n)$.

    Maybe one could argue, that there is a faster calculation than a black box
    search when the time limit is shorter.
    Assuming there is a faster calculation for
    $\langle M,1^n, 1^t  \rangle \in^? \UNP $, when
    \begin{equation}
      \exists x\in\{0,1\}^n \text{ with } \langle M, x, 1^t  \rangle \in \oUP
      \text{ but } x\in L(M) \text{.}
    \end{equation}
    Then it would be decidable, if $M$ accepts an $x$ in more than $t$ steps.
    But it is not.
  \end{proof}
  \section{Conclusion}
  Maybe $\P^\P=\P$ and $\NP^\P=\NP$. In this case the result would be
  a solution of the P vs. NP problem. But for solving P vs. NP one
  needs a new insight, as Oded Goldreich mentioned on his web page\cite{odedPvsNP}.
  Is there a new insight in this paper?
Someone anonymous told me:
  "Experts agree that you cannot solve problems like P versus NP with computability tricks". So, it is a novel insight.
  \bibliography{lit}{}
\bibliographystyle{plain}

\end{document}